\documentclass[conference,a4paper,10pt]{IEEEtran}

\usepackage{silence}
\WarningFilter{caption}{Unknown document class (or package)}
\WarningFilter{glossaries}{No \printglossary or \printglossaries found.}
\WarningFilter{todonotes}{The length marginparwidth is less than 2cm}
\usepackage[left=1.62cm,right=1.6cm,top=1.9cm]{geometry}

\usepackage{cite}

\usepackage{amsmath,amssymb,amsthm,fixmath}

\usepackage{color,colortbl}
\usepackage{xcolor}

\usepackage[inline]{enumitem}

\usepackage{siunitx}
\DeclareSIUnit{\dBm}{dBm}

\usepackage{graphicx}
\usepackage[labelformat=simple]{subcaption}

\usepackage{epstopdf}
\epstopdfsetup{suffix=}
\epstopdfsetup{prefersuffix=false}


\usepackage{cuted}
\usepackage{multirow,booktabs}
\usepackage{diagbox}
\usepackage{makecell}
\usepackage{hhline}
\usepackage{array} 
\newcolumntype{x}{!{\vrule width 2px}}
\newcolumntype{y}{!{\vrule width 1.5px}}

\usepackage{optidef}


\usepackage[shortcuts,acronym,automake]{glossaries}
\makeglossaries
\newacronym{awgn}{AWGN}{additive white Gaussian noise}
\newacronym{bcd}{BCD}{block coordinate descent}
\newacronym{cp}{CP}{control plane}
\newacronym{csi}{CSI}{channel state information}
\newacronym{csit}{CSIT}{channel state information at transmitter}
\newacronym{dft-s-ofdm}{DFT-s-OFDM}{Discrete Fourier Transform-spread-OFDM}
\newacronym{fbl}{FBL}{finite blocklength}
\newacronym{gan}{GAN}{generative adversarial network}
\newacronym{ibl}{IBL}{infinite blocklength}
\newacronym{lfp}{LFP}{leakage-failure probability}
\newacronym{mimo}{MIMO}{multi-input multi-output}
\newacronym{mm}{MM}{Majorize-Minimization}
\newacronym{noma}{NOMA}{non-orthogonal multi-access}
\newacronym{nom}{NOM}{non-orthogonal multiplexing}
\newacronym{ofdm}{OFDM}{orthogonal frequency-division multiplexing}
\newacronym{ofdma}{OFDMA}{orthogonal frequency-division multiple access}
\newacronym{oma}{OMA}{orthogonal multiple access}
\newacronym{papr}{PAPR}{Peak-to-Average Power Ratio}
\newacronym{per}{PER}{packet error rate}
\newacronym{phy}{PHY}{physical}
\newacronym{pld}{PLD}{physical layer deception}
\newacronym{pls}{PLS}{physical layer security}
\newacronym{prb}{PRB}{physical resource block}
\newacronym{sic}{SIC}{successive interference cancellation}
\newacronym{sinr}{SINR}{signal-to-interference-and-noise ratio}
\newacronym{snr}{SNR}{signal-to-noise ratio}
\newacronym{tdma}{TDMA}{time-division multiple access}
\newacronym{up}{UP}{user plane}
\newacronym{urllc}{URLLC}{ultra-reliable low-latency communication}
\newacronym{fp}{FP}{fractional programming}

\usepackage{tcolorbox}
\tcbuselibrary{many}
\newtheorem{theorem}{Theorem}
\newtheorem{lemma}{Lemma}

%
%
%
%
%

\usepackage[norelsize,linesnumbered,ruled]{algorithm2e}
\SetKwRepeat{Do}{do}{while}
\makeatletter
\newcommand{\removelatexerror} {\let\@latex@error\@gobble}
\makeatother


\newcommand{\superscript}[1]{^{\mathrm{#1}}}
\newcommand{\subscript}[1]{_{\mathrm{#1}}}



\usepackage[normalem]{ulem}
\newcommand{\revise}[2]{{\color{red}\sout{#1}}{\color{blue}#2}}

\setlength{\marginparwidth}{3em}
\usepackage[textsize=tiny,colorinlistoftodos]{todonotes}
\makeatletter
\define@key{todonotes}{bh}[]{
	\setkeys{todonotes}{author=\textbf{Bin}, color=lime!30}}%
\define@key{todonotes}{yz}[]{
	\setkeys{todonotes}{author=\textbf{Yao}, color=blue!30}}%
\define@key{todonotes}{wc}[]{
	\setkeys{todonotes}{author=\textbf{Wenwen}, color=green!30}}%
\makeatother

\newif\ifreviewmode
\reviewmodetrue 
\reviewmodefalse

\ifreviewmode
\else
  \renewcommand{\todo}[1]{} 
  \renewcommand{\revise}[2]{#2} 
\fi



\hyphenation{op-tical net-works semi-conduc-tor}

\newcommand\bob{\subscript{Bob}}
\newcommand\eve{\subscript{Eve}}
\newcommand\lf{\subscript{LF}}

\begin{document}

\title{Physical Layer Deception in OFDM Systems}

\author{
	
	\IEEEauthorblockN{
        Wenwen~Chen\IEEEauthorrefmark{1},
		Bin~Han\IEEEauthorrefmark{1},
		Yao~Zhu\IEEEauthorrefmark{2},
		Anke~Schmeink\IEEEauthorrefmark{2},
and~Hans~D.~Schotten\IEEEauthorrefmark{1}\IEEEauthorrefmark{3}
	}
	
	\IEEEauthorblockA{
		\IEEEauthorrefmark{1}RPTU Kaiserslautern-Landau, 
		\IEEEauthorrefmark{1}RWTH Aachen University,
		\IEEEauthorrefmark{3}German Research Center for Artificial Intelligence (DFKI)
	}
}

\bstctlcite{IEEEexample:BSTcontrol}

\maketitle

\begin{abstract}
 As a promising technology, \ac{pls} enhances security by leveraging the physical characteristics of communication channels. However, \revise{the conventional \ac{pls} approach leads to a considerable disparity in the effort legitimate users need to secure data compared to eavesdroppers.}{it commonly takes the legitimate user more effort to secure its data, compared to that required by the eavesdropper to intercept the communication.} To address this imbalance, we propose a \ac{pld} framework, which applies random deceptive ciphering combined with \ac{ofdm} to \revise{defend against eavesdropping proactively.}{deceive eavesdroppers with falsified information, preventing them from wiretapping.} While ensuring the same level of confidentiality as traditional \ac{pls} methods, the \ac{pld} approach additionally introduces a deception mechanism, \revise{}{which remains effective} even when the eavesdropper has the same knowledge about the transmitter as the legitimate receiver. Through detailed theoretical analysis and numerical simulations, we prove the superiority of our method over the conventional \ac{pls} approach.
\end{abstract}

\begin{IEEEkeywords}
PLS, cyber deception, OFDM, FBL.
\end{IEEEkeywords}

\IEEEpeerreviewmaketitle

\glsresetall

\vspace{-3mm}
\section{Introduction}\label{sec:introduction}

\Ac{pls} has gained significant attention as a rising area of interest in wireless systems. Unlike traditional cryptographic methods, \ac{pls} exploits the characteristics of the physical channel and provides an additional layer of protection against eavesdropping. As an effective complement to traditional methods, \ac{pls} is becoming increasingly crucial in contemporary wireless networks \cite{HFA2019classification}.

Although \revise{most research}{most existing works} on \ac{pls} focus on infinite blocklength codes, it is crucial \revise{and necessary to exploit}{to consider} \ac{pls} on \ac{fbl} due to the future trend of \ac{urllc} \cite{she2021tutorial}, where the data \revise{packet only consist of a small number of bits}{packets are length-constrained} to support extremely reliable transmission with minimal latency. To access \ac{pls} performance with \ac{fbl}, the authors in \cite{YSP+2019wiretap} establish the bounds for the achievable security rate considering a specified leakage probability and error probability. Efforts such \revise{like}{as} \cite{liu2023energy, li2023joint, liu2023predictive} have been made to explore the \ac{fbl} regime for \ac{pls}. Furthermore, the authors in \cite{yang2019wiretap} investigate the maximal secrecy rate over a wiretap channel and its tightest bounds for discrete memoryless and Gaussian channels. The authors in \cite{wang2022achieving} maximize the secrecy rate under the covertness constraint by maintaining the confidential signal's signal-to-noise ratio below a certain threshold in the wiretap channel, preventing eavesdroppers from detecting the transmission. The interplay between reliability and security is studied in \cite{oh2023joint}, where the joint secure-reliability performance is improved by optimizing the allocation of transmission resources. In \cite{zhu2023trade}, the idea of trading reliability for security is introduced to describe the trade-off between security and reliability in \ac{pls} for short-packet transmissions.

However, the passive nature of \ac{pls} results in a notable imbalance between the legitimate users and the eavesdroppers, as the eavesdroppers can always attempt to wiretap with little risk of being detected%
, \revise{while}{whereas} legitimate users must take more precautions to secure data%
. To \revise{make up for this shortcoming}{address this limitation}, active defense methods should be \revise{introduced}{integrated} to \ac{pls}, such as deception technologies, which aim to confuse and distract potential eavesdroppers by generating false data or environments, thereby securing the real information. The principles of deception were initially introduced by \emph{Mitnick} \cite{mitnick2003art} in the field of social engineering and then adapted into defensive strategies, which were called \emph{honeypots} and then expanded to a wider range of deception technologies \cite{fraunholz2018demystifying}. However, in the physical layer of wireless systems, deception technologies are still in the early stages of development. In \cite{he2022proactive}, the spatial diversity of \ac{mimo} is \revise{utilized}{exploited} to lure an eavesdropper into a trap area where the fake messages are received. The authors in \cite{qi2024adversarial} design a \ac{gan} to generate waveforms that disrupt the eavesdropper's recognition model.

We proposed a novel framework \revise{of}{for} \ac{pld} in \cite{HZS+2023nonorthogonal} where \ac{nom} was applied to enhance security. This framework was the first to integrate \ac{pls} with deception technologies. We \revise{}{jointly} optimized the encryption rate and the power allocation to achieve high secure reliability and effective deception. We further improve the optimization problem in \cite{chen2025physical}, where we maximized the effective deception rate under the constraint of \ac{lfp} instead of directly combining the secrecy performance and deception performance. Additionally, we detailed the system model with both activated and deactivated ciphering and provided a comprehensive reception error model in different scenarios.
\revise{}{Although \ac{nom} improves security through the superposition of ciphertext and key, it faces limitations in practical applications. Compared to orthogonal schemes, \ac{nom} introduces additional decoding complexity due to the \ac{sic} \cite{islam2016power}. Furthermore, our previous work on optimizing key length imposed strict requirements on cipher design. Given these challenges, adopting \ac{ofdm} presents an attractive alternative, which is compatible with conventional wireless standards and frees the receiver from the \ac{sic} operation.}
\revise{In this work,}{Therefore, in this paper} we extend our previous work and investigate the performance of \ac{ofdm}-based \ac{pld}. Instead of optimizing the deception rate \revise{by setting the constraint of \ac{lfp}}{under a low \ac{lfp} constraint}, \revise{we consider the constraint of throughput so that a high deception rate can be attained while ensuring the security and efficiency of the transmission.}{we introduce a throughput constraint to achieve a high deception rate while maintaining both transmission security and efficiency.} By jointly optimizing the channel coding rates of ciphertext and key, \revise{we achieve a high deception rate while maintaining \ac{lfp} as low as in the conventional \ac{pls} method.}{the proposed framework attains a high deception rate while preserving an \ac{lfp} comparable to that of conventional \ac{pls} methods.}

The \revise{remaining part}{remainder} of this paper is organized as follows. \revise{We begin with setting up the system model and optimization problem in Sec. \ref{sec:problem}}{In Sec. \ref{sec:problem}, we establish the system model and formulate the optimization problem}. Afterward, we present our theoretical analyses and our optimization algorithm in Sec. \ref{sec:approach}. In Sec. \ref{sec:evaluation}, \revise{our approach is numerically validated and evaluated}{we validate and evaluate the approach through numerical simulations}. Finally, we \revise{conclude our paper and provide outlooks}{conclude the paper and outline potential directions for future research} in Sec. \ref{sec:conclusion}.

\section{Problem Setup}\label{sec:problem}
\subsection{System Model}
We consider an end-to-end communication system where the information source \emph{Alice} sends messages to the receiver \emph{Bob} over wireless channel $h\bob$ with gain $z\bob=\left\vert h\bob\right\vert^2$. At the same time, an eavesdropper \emph{Eve} \revise{an eavesdropping channel listens to \emph{Alice} over}{attempts to intercept the messages through} the eavesdropping channel $h\eve$ with gain $z\eve=\left\vert h\eve\right\vert^2$. With proper beamforming, \emph{Alice} can keep $h\bob$ statistically superior to $h\eve$, which is \revise{a necessary condition of \ac{pls} feasibility}{a prerequisite for the feasibility of \ac{pls}}. Our \revise{}{proposed} framework is illustrated in Fig. ~\ref{fig:alice_model}, where \emph{Alice} applies a two-stage encoder followed by \ac{ofdm}-based waveforming.
\begin{figure}[!htpb]
	\centering
    \vspace{-3mm}
	\begin{subfigure}[t]{\linewidth}
		\centering
        \frame{\includegraphics[width=0.8\linewidth]{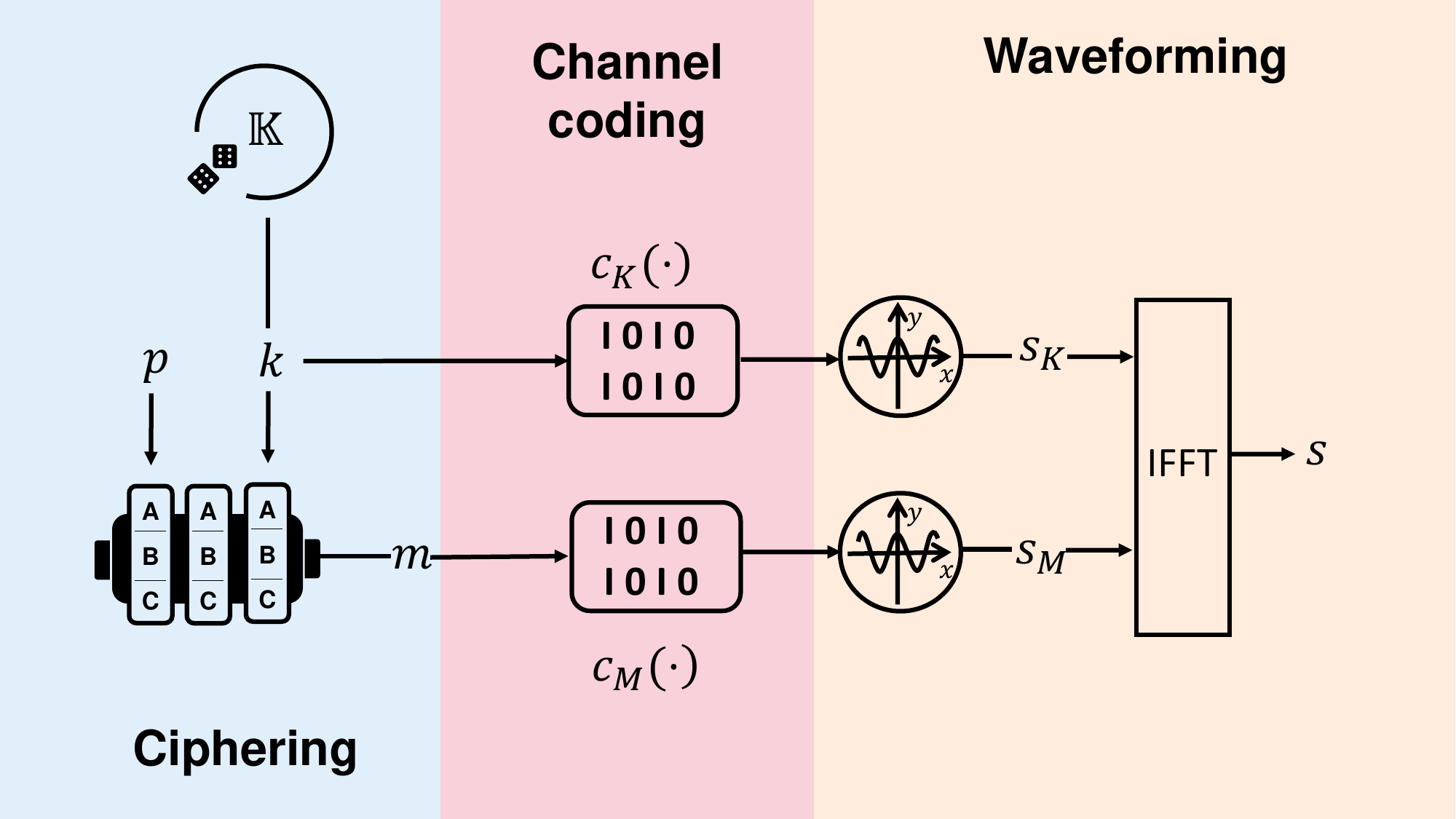}}
		\subcaption{}
		\label{subfig:alice_model_deception_active}
	\end{subfigure}\\
	\begin{subfigure}[t]{\linewidth}
		\centering
		\frame{\includegraphics[width=0.8\linewidth]{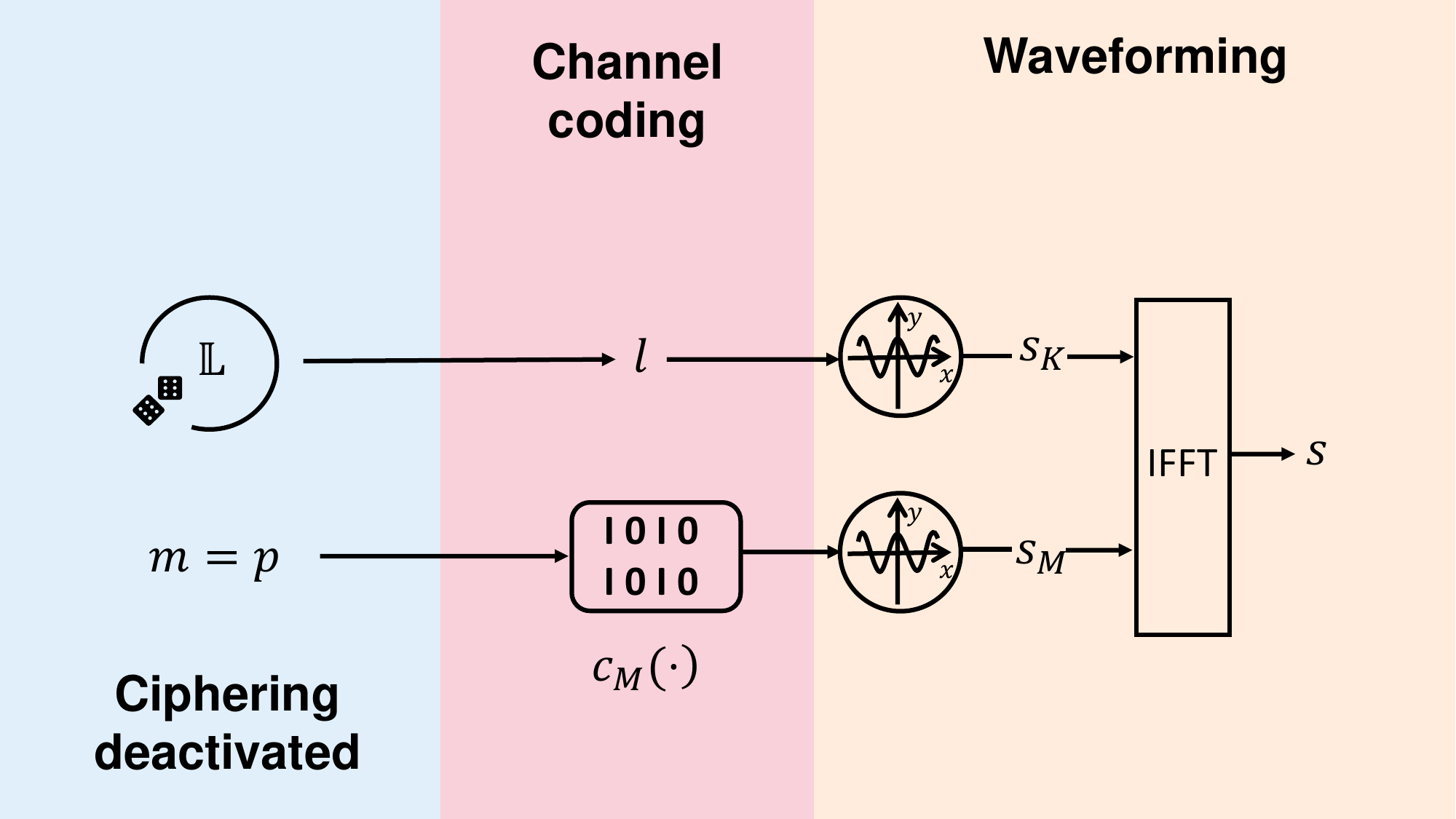}}
		\subcaption{}
		\label{subfig:alice_model_deception_inactive}
	\end{subfigure}
    \vspace{-2mm}
	\caption{The transmitting scheme of \emph{Alice}, with deceptive ciphering \subref{subfig:alice_model_deception_active} activated and \subref{subfig:alice_model_deception_inactive} deactivated, respectively.}
	\label{fig:alice_model}
    \vspace{-2mm}
\end{figure}

In this framework, the cipherer can be activated and deactivated by option. When activated, the $d\subscript{P}$-bit plaintext $p$ is encrypted into a $d\subscript{M}$-bit ciphertext using a $d\subscript{K}$-bit key $k$:
\begin{equation}
	m=f(p,k)\in\mathbb{M},\quad\forall (p,k)\in\left(\mathbb{P}\times\mathbb{K}\right),
\end{equation}
where $\mathbb{M}\subseteq \{0,1\}^{d\subscript{M}}$, $\mathbb{P}\subseteq \{0,1\}^{d\subscript{P}}$, and $\mathbb{K}\subseteq \{0,1\}^{d\subscript{K}}$ are the feasible sets of ciphertext codes, plaintext codes, and keys, respectively.
On the other hand, given the chosen key $k$, the plaintext can be decrypted from the ciphertext \revise{by}{using the inverse function}
$p=f^{-1}(m,k)$. 
Especially, codebooks must be designed \revise{to ensure}{such that the set of ciphertexts $\mathbb{M}$ is a subset of the plaintext set $\mathbb{P}$}: $\mathbb{M}\subseteq\mathbb{P}$.
\revise{and that}{Furthermore, for} $\forall\left(m,k,k'\right)\in\left(\mathbb{M}\times\mathbb{K}^2\right)$, it \revise{holds}{must hold that} $f^{-1}(m,k')\vert_{k'\neq k}\neq f^{-1}(m,k)$, \revise{}{ensuring that decryption with an incorrect key produces a result distinct from the correctly decrypted plaintext.}

The second stage involves channel coding, where error correction redundancies are added to both $m$ and $k$. The two output codewords are then individually modulated using \ac{ofdm}. On the receiver side, for both $i\in\{\text{Bob}, \text{Eve}\}$, \revise{it holds}{the received signal can be expressed as}
$r_i=s_i*h_i+w_i$,
where $s_i$ denotes the power-normalized baseband signal to transmit, $r_i$ is the received baseband signal, $h_i$ is the channel coefficient, and $w_i$ represents the equivalent baseband noise at receiver $i$.

On the other hand, when the cipherer is deactivated, \revise{}{ no encryption is performed}. Thus, the plaintext $p$ is directly \revise{inherited}{treated} as the ciphertext, i.e., $m=p$. Meanwhile, instead of \revise{}{using} a valid ciphering key $k\in\mathbb{K}$, a randomly generated ``litter'' sequence $l\in\mathbb{L}$ is used to derive $s\subscript{K}$. Particularly, the set of litter codes $\mathbb{L}\subseteq\{0,1\}^n$ shall fulfill
\begin{equation}
	\vspace{-2mm}
	\not\exists \{k,l\}\in\mathbb{K}\times\mathbb{L}: D\subscript{Hamm}(c\subscript{K}(k),l)\leqslant D\subscript{max},
\end{equation}
where $D\subscript{Hamm}(x,y)$ is the Hamming distance between $x$ and $y$, and $D\subscript{max}$ is the \revise{maximal distance of a received codeword from the codebook for the channel decoder}{maximum allowable distance for the channel decoder} $c^{-1}\subscript{K}$ to correct errors. The waveforming stage remains unchanged, following the same process as when the deceptive cipherer is activated.

Challenging the worst case where \emph{Eve} has maximum knowledge of this framework, we assume that the tuple $\left(\mathbb{P}, \mathbb{M}, \mathbb{K}, f, f^{-1}\right)$, as well as the modulation and channel coding schemes, are all \emph{common knowledge} shared among \emph{Alice}, \emph{Bob}, and \emph{Eve}. Assuming that both \emph{Bob} and \emph{Eve} have perfect knowledge of their own channels so that ideal channel equalization is achieved.

\subsection{Error Model}\label{subsec:error_model}
When the deceptive ciphering is activated, decoding both $m$ and $k$ can result in one of three possible outcomes:
\begin{enumerate}
\item \emph{Success}: When the bit errors fall within the error correction capability of the channel decoder, the data is retrieved.
\item \emph{Erasure}: If the bit errors surpass the receiver's error correction capability but \revise{not}{remain within} its error detection capability, the receiver will recognize and report an erasure.
\item \emph{Error}: If the bit errors exceed the error detection capability, the receiver will incorrectly decode the data, resulting in an undetected packet error.

\end{enumerate}

Practically, if \emph{Alice} is properly configured to encode both $m$ and $k$ with sufficient redundancy and transmit with adequate power, undetected error is unlikely to occur. Thus, there are three possible deciphering outcomes, as represented in Tab. \ref{tab:err_model}.
\begin{enumerate}
    \item \emph{Perception}: If both $m$ and $k$ are successfully decoded, the plaintext $p$ is correctly perceived by the receiver. 

    \item \emph{Loss}: If $m$ is erased, the receiver is unable to decrypt \revise{}{the message}, regardless \revise{its reception of $k$}{of whether $k$ is successfully decoded}, resulting in the loss of the $p$.

    \item \emph{Deception}: If the ciphering mechanism is randomly activated on selected messages (e.g., the most confidential ones), the deception can occur \revise{when the receiver successfully decodes $m$ but have $k$ erased.}{when $m$ is successfully decoded but $k$ is erased.} In this scenario, the receiver, unaware of whether the cipherer is active, cannot determine if the issue is caused by a transmission error or if the cipherer is inactive (meaning no $k$ but a random $l$ was transmitted). \revise{If the receiver mistakenly assumes the former as the latter}{If the receiver incorrectly assumes the issue to be an inactive cipherer}, it will interpret the ciphertext $m$ as unciphered plaintext, leading to a successful deception.
\end{enumerate}

\vspace{-3mm}
\begin{table}[!htpb]
    \centering
    \begin{tabular}{ c  c  c  c |}
        \multicolumn{2}{c}{}	&	\multicolumn{2}{c}{\textbf{Ciphertext}}\\
        \multicolumn{2}{c}{}	&	\textit{Success}	&	\multicolumn{1}{c}{\textit{Erasure}}	\\\hhline{*{2}~*{2}-}
        &\multicolumn{1}{r|}{\textit{Success}}	&	\multicolumn{1}{c|}{\cellcolor[gray]{0.9}Perception}	& {\cellcolor[gray]{0.7}} 	\\\hhline{*{2}~|->{\arrayrulecolor[gray]{0.7}}->{\arrayrulecolor{black}}|}
        \multirow{-2}{*}{\rotatebox{90}{\textbf{Key}}} &\multicolumn{1}{r|}{\textit{Erasure}}	& \multicolumn{1}{c|}{\cellcolor[gray]{0.5}Deception}	& \multirow{-2}{*}{Loss}{\cellcolor[gray]{0.7}}	\\\hhline{*{2}~|*{2}-|}
    \end{tabular}
    \caption{Reception error model of the proposed approach with random cipherer activation.}
    \label{tab:err_model}
\end{table}

\vspace{-5mm}
\subsection{Performance Metrices}
Conventional \ac{pls} approaches, which primarily operate in the \ac{ibl} regime, often rely on secrecy capacity to evaluate security performance. However, in the \ac{fbl} regime, the conventional notion of channel capacity is no longer applicable, as error-free transmission is rarely attainable~\cite{YSP+2019wiretap}. \revise{}{Thus, secrecy capacity is not valid to evaluate the secrecy performance of short-packet communication systems \cite{wang2019secure}.}
 \revise{}{To evaluate the reliable-secure performance for a single transmission,} we introduce the \ac{lfp}, \revise{}{defined as} $\varepsilon\lf=1-(1-\varepsilon\bob)\varepsilon\eve$. This metric represents the probability that the plaintext is either correctly perceived by the eavesdropper \emph{Eve} or not perceived by the legitimate user \emph{Bob} \cite{zhu2023trade}.
Here, $\varepsilon\bob$ and $\varepsilon\eve$ are the non-perception probabilities of \emph{Bob} and \emph{Eve}, respectively. Notating $\varepsilon_{i,j}$ as the erasure probability of receiver $i\in\{\text{Bob}, \text{Eve}\}$ \revise{at}{when} decoding the message component $j\in\{\text{M}, \text{K}\}$, \revise{we have}{the overall erasure probability for receiver $i$ is given by:} $\varepsilon_i=1-(1-\varepsilon_{i,\text{M}})(1-\varepsilon_{i,\text{K}})$. Therefore, the \ac{lfp} can be calculated as:
$\varepsilon\lf=1-\left (1-\varepsilon\subscript{Bob,M} \right) \left (1-\varepsilon\subscript{Bob,K}\right)\left[ 1-\left (1-\varepsilon\subscript{Eve,M} \right) \left (1-\varepsilon\subscript{Eve,K} \right) \right]$.
Additionally, to evaluate the performance of deceiving eavesdroppers, we define the effective deception rate as the probability that not \emph{Bob} but only \emph{Eve} is deceived, i.e.,
	$R\subscript{d}=\left[1-\left(1-\varepsilon\subscript{Bob,M}\right)\varepsilon\subscript{Bob,K}\right](1-\varepsilon\subscript{Eve,M})\varepsilon\subscript{Eve,K}$.
According to~\cite{PPV2010channel}, the error probability $\varepsilon_{i,j}$ with a given packet size $d_j$ can be written as
    $\varepsilon_{i,j} = Q \left(\sqrt{\frac{n_j}{V(\gamma_{i})}}(\mathcal{C}(\gamma_{i})-\frac{d_j}{n_j}) \ln{2}\right)$,
where  $Q(x)=\frac{1}{\sqrt{2 \pi}} \int_{x}^{\infty} e^{-t^2/2} dt$ is the Q-function in statistic, $\mathcal{C}(\gamma_i)=\log_2(1+\gamma_i)$ is the Shannon capacity, ${V}(\gamma_i)=1-\frac{1}{{{\left( 1+{{\gamma_i }} \right)}^{2}}}$ is the channel dispersion with $\lambda_i=\frac{z_iP}{\sigma^2}$.

\subsection{Strategy Optimization}
\revise{For convenience of analysis, we assume the subcarriers are allocated to the ciphertext and the key equally.}{For analytical convenience, we assume that the subcarriers are equally allocated between the ciphertext and the key. } The cyclic prefix duration is set to 0 and the bandwidth is normalized to 1.%
\revise{}{It is worth noting that at low SNR (i.e. below $\SI{0}{\dB}$), the capacities of BPSK and $M$-QAM approach the Shannon limit \cite{gobel2010information}. Therefore, without loss of generality, we continue to use the Shannon formula for capacity calculation and adopt BPSK as the modulation scheme}. 
The throughput is defined as
    $T=\left(1-\varepsilon\subscript{\lf}\right)\left(\frac{d\subscript{M}}{n\subscript{M}+n\subscript{K}}\right)$.
\revise{In order to pursue}{To achieve} secure and efficient transmission, we \revise{}{aim to} maximize the effective deception rate while maintaining high throughput. To simplify the encryptor design, The ciphertext length $d\subscript{M}$ and the key length $d\subscript{K}$ are fixed , while their coding rates are adjustable to ensure transmission efficiency. The optimization problem can be formulated as:
\begin{maxi!}
	{n\subscript{M}, n\subscript{K}\in\mathbb{Z}^+ }{R\subscript{d}} {\label{Obj fun: Deception rate}}{\label{Problem_orig}}{}
	\addConstraint{\varepsilon\subscript{Bob,M}\leqslant \varepsilon\superscript{th}\subscript{Bob,M}, \varepsilon\subscript{Eve,M}\leqslant \varepsilon\superscript{th}\subscript{Eve,M} \label{con:err_message_thresholds}}
	\addConstraint{\varepsilon\subscript{Bob,K}\leqslant \varepsilon\superscript{th}\subscript{Bob,K}, \varepsilon\subscript{Eve,K}\geqslant \varepsilon\superscript{th}\subscript{Eve,K} \label{con:err_key_thresholds}}
    \addConstraint{T \geqslant T\superscript{th} 
    \label{con:throughput_threshold}}
\end{maxi!}

\vspace{-3mm}
\section{Proposed Approach}\label{sec:approach}
\vspace{-1mm}
\subsection{Analyses}\label{subsec:analyses}
The original problem (\ref{Obj fun: Deception rate}) is challenging to solve due to the non-convexity of $R\subscript{d}$. Therefore, we reformulate it into a simpler but equivalent version.  Based on our analytical insights, we demonstrate that the objective function exhibits partial convexity with respect to each optimization variable.

We first relax $n\subscript{M}$ and $n\subscript{K}$ from integers to real values:
\begin{maxi!}
	{n\subscript{M},  n\subscript{K}\in \mathbb{R}^+ }{R\subscript{d}} {\label{Problem_SCA}}{}
    \addConstraint{\eqref{con:err_message_thresholds} - \eqref{con:throughput_threshold}\nonumber}
\end{maxi!}
Subsequently, the original problem can be decomposed with:
\begin{lemma}
\label{lemma:approximation}
For a given $\left(\hat{n}\superscript{(q)}\subscript{M}, \hat{n}\superscript{(q)}\subscript{K}\right)$, $R\subscript{d}$ is lower-bounded by an approximation $\hat{R}\subscript{d} \left(n\subscript{M}, n\subscript{K} \vert \hat{n}\superscript{(q)}\subscript{M}, \hat{n}\superscript{(q)}\subscript{K}\right)$, i.e.,
\begin{align}
    R\subscript{d}(n\subscript{M},n\subscript{K})&=\left[1-\left(1-\varepsilon\subscript{Bob,M}\right)\varepsilon\subscript{Bob,K}\right](1-\varepsilon\subscript{Eve,M})\varepsilon\subscript{Eve,K}\nonumber \\
    &\geqslant \left[1-\left(1-\hat{\varepsilon}\subscript{Bob,M}(\hat{n}\subscript{M}\superscript{(q)},\hat{n}\subscript{K}\superscript{(q)})\right)\varepsilon\subscript{Bob,K}\right] \nonumber
     \\
   &\cdot \left((1-\varepsilon\subscript{Eve,M})\hat{\varepsilon}\subscript{Eve,K}(\hat{n}\subscript{M}\superscript{(q)}, \hat{n}\subscript{K}\superscript{(q)})\right) \label{Rd-approxi}\\
   & \triangleq \hat{R_d}\left(n\subscript{M},n\subscript{K}\vert \hat{n}\subscript{M}\superscript{(q)},\hat{n}\subscript{K}\superscript{(q)}\right)\nonumber
\end{align}
where
    $\hat{\varepsilon}\subscript{Bob,M}(\hat{n}\subscript{M}\superscript{(q)},\hat{n}\subscript{K}\superscript{(q)})=1\!-\!b(-\hat {\omega })e^{-a(-\hat {\omega })\omega\subscript{M} }\!-\!c(-\hat {\omega })$
and
     $\hat{\varepsilon}\subscript{Eve,K}(\hat{n}\subscript{M}\superscript{(q)},\hat{n}\subscript{K}\superscript{(q)})=1\!-\!b(-\hat{\omega})e^{-a(-\hat {\omega })\omega\subscript{K} }\!-\!c(-\hat {\omega })$.
\end{lemma}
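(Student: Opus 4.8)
The plan is to derive \eqref{Rd-approxi} from two elementary one-sided estimates stacked together: a monotonicity argument permitting the substitution of underestimators of $\varepsilon\subscript{Bob,M}$ and $\varepsilon\subscript{Eve,K}$ into $R\subscript{d}$, and, for each of those two erasure probabilities, an exponential underestimator obtained by linearizing the Gaussian tail in the logarithmic domain about the current iterate.

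\textbf{Step 1 — monotonicity reduction.} View $R\subscript{d}=\left[1-(1-\varepsilon\subscript{Bob,M})\varepsilon\subscript{Bob,K}\right](1-\varepsilon\subscript{Eve,M})\varepsilon\subscript{Eve,K}$ as a function on $[0,1]^4$ of the four erasure probabilities. Then $\partial R\subscript{d}/\partial\varepsilon\subscript{Bob,M}=\varepsilon\subscript{Bob,K}(1-\varepsilon\subscript{Eve,M})\varepsilon\subscript{Eve,K}\geqslant 0$ and $\partial R\subscript{d}/\partial\varepsilon\subscript{Eve,K}=\left[1-(1-\varepsilon\subscript{Bob,M})\varepsilon\subscript{Bob,K}\right](1-\varepsilon\subscript{Eve,M})\geqslant 0$, with both signs uniform over the box. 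Hence $R\subscript{d}$ is non-decreasing in $\varepsilon\subscript{Bob,M}$ and in $\varepsilon\subscript{Eve,K}$; since the right-hand side of \eqref{Rd-approxi} leaves $\varepsilon\subscript{Bob,K}$ and $\varepsilon\subscript{Eve,M}$ in their exact form, it is enough to prove the two pointwise bounds $\hat\varepsilon\subscript{Bob,M}\leqslant\varepsilon\subscript{Bob,M}$ and $\hat\varepsilon\subscript{Eve,K}\leqslant\varepsilon\subscript{Eve,K}$ over the feasible region; substituting them one after the other then gives $R\subscript{d}\geqslant\hat R\subscript{d}$.

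\textbf{Step 2 — exponential underestimator.} Both $\varepsilon\subscript{Bob,M}$ and $\varepsilon\subscript{Eve,K}$ have the common form $\varepsilon_{i,j}=Q\!\left(\sqrt{n_j/V(\gamma_i)}\,(\mathcal{C}(\gamma_i)-d_j/n_j)\ln 2\right)$. Introducing the auxiliary variable $\omega_j$, a monotone reparametrization of $n_j$, the complementary probability $1-\varepsilon_{i,j}$ becomes a value of the standard Gaussian CDF $1-Q(\cdot)$, which is log-concave. Its logarithm therefore lies below its tangent at the iterate point $\hat\omega$; exponentiating yields a global exponential majorant $1-\varepsilon_{i,j}\leqslant b(-\hat\omega)e^{-a(-\hat\omega)\omega_j}+c(-\hat\omega)$, whose coefficients $a(-\hat\omega),b(-\hat\omega),c(-\hat\omega)$ are fixed by matching the value and the lower-order local behavior of $1-\varepsilon_{i,j}$ at $\hat\omega$. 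Rearranging gives precisely $\hat\varepsilon\subscript{Bob,M}=1-b(-\hat\omega)e^{-a(-\hat\omega)\omega\subscript{M}}-c(-\hat\omega)\leqslant\varepsilon\subscript{Bob,M}$ and, with $i=\text{Eve}$, $j=\text{K}$, the analogue for $\hat\varepsilon\subscript{Eve,K}$; both are tight at $\omega_j=\hat\omega$, i.e.\ at $(n\subscript{M},n\subscript{K})=(\hat n\subscript{M}\superscript{(q)},\hat n\subscript{K}\superscript{(q)})$, so $\hat R\subscript{d}=R\subscript{d}$ there — the minorization property the iterative scheme will rely on.

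\textbf{Main obstacle.} The non-routine part is Step 2: one must identify the reparametrization $\omega_j$ under which $1-\varepsilon_{i,j}$ is genuinely log-concave, invoke the log-concavity of the Gaussian tail to promote the log-domain tangent to a global bound, and — using the sign restrictions imposed by \eqref{con:err_message_thresholds}--\eqref{con:err_key_thresholds} to confine $\omega_j$ to the interval on which the $n_j\!\leftrightarrow\!\omega_j$ map is invertible and $a(\cdot),b(\cdot),c(\cdot)$ are well defined — choose the additive constant $c(-\hat\omega)$ so that it tightens the surrogate without ever breaking the inequality. Step 1 and the tightness check at $\hat\omega$ are straightforward by comparison.
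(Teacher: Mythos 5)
Your skeleton is the same as the paper's: the paper also proves the lemma by (i) minorizing the two erasure probabilities $\varepsilon\subscript{Bob,M}$ and $\varepsilon\subscript{Eve,K}$ with exponential-type surrogates of the Q-function that are tight at the current iterate, and (ii) relying (implicitly) on the fact that $R\subscript{d}$ is nondecreasing in each of these two arguments while $\varepsilon\subscript{Bob,K}$ and $\varepsilon\subscript{Eve,M}$ are left exact. Your Step~1 makes that monotonicity explicit and is correct (one small caveat: substituting the two underestimators ``one after the other'' requires the intermediate coefficient $\left[1-\left(1-\hat{\varepsilon}\subscript{Bob,M}\right)\varepsilon\subscript{Bob,K}\right]\left(1-\varepsilon\subscript{Eve,M}\right)$ to stay nonnegative, which is not automatic since $1-b\,e^{(\cdot)}-c$ can dip below zero; this should be tied to the thresholds in \eqref{con:err_message_thresholds}--\eqref{con:err_key_thresholds}).

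The gap is in Step~2, which is the only place you attempt work the paper does not do: the paper simply cites the Q-function sandwich bound (Lemma~3 of the referenced work of Zhu et al.) rather than deriving it. Log-concavity of the Gaussian CDF $\Phi(\omega)=1-Q(\omega)$ gives $\Phi(\omega)\leqslant\Phi(\hat{\omega})\exp\bigl(\tfrac{e^{-\hat{\omega}^2/2}}{\sqrt{2\pi}\,\Phi(\hat{\omega})}(\omega-\hat{\omega})\bigr)$, i.e.\ a \emph{pure} exponential majorant with a positive exponent coefficient and no additive constant. This coincides with the lemma's surrogate only in the regime where the first argument of the $\max$ defining $a(-\hat{\omega})$ is active (there one can check $c(-\hat{\omega})=0$); in the complementary regime $a(-\hat{\omega})=-\hat{\omega}$ and $c(-\hat{\omega})\neq 0$, and the tangent-in-log-domain argument does not produce the bound --- this is exactly the case you label the ``main obstacle'' and then leave unresolved. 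A completed derivation would also have caught a sign issue: a surrogate $b\,e^{-a\omega}+c$ with $a>0$ is decreasing in $\omega$ and so cannot globally majorize the increasing $\Phi(\omega)$ while touching it at $\hat{\omega}$; the reflection $Q(\omega)=1-Q(-\omega)$ yields exponent $+a(-\hat{\omega})\omega$ (the paper's appendix carries the same sign slip, which you reproduced). None of this threatens the lemma itself, since the minorant is an externally established result, but as a self-contained argument your Step~2 does not close.
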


\begin{proof}
	Omitted due to length limitation, see \cite{chen2024physical}.
\end{proof}

In Eq.~\eqref{Rd-approxi}, $R\subscript{d}$ reaches the lower-bound $\hat{R}\subscript{d}$ at the point $\left(\hat{n}\subscript{M}\superscript{(q)},\hat{n}\subscript{K}\superscript{(q)}\right)$, which inspires us to utilize the \ac{mm} algorithm combining with the \ac{bcd} \cite{tseng2001convergence} method to solve the problem. Thus, we first decompose the problem in each $t\superscript{th}$ iteration by fixing $n\subscript{M}$. The corresponding problem is given by:

\begin{maxi!}
	{n\subscript{K}\in\mathbb{R}^+}{\left.\hat{R}\subscript{d}^{(t)}\left(n\subscript{K}\left\vert \hat{n}\subscript{M}^{(q)},\hat{n}\subscript{K}^{(q)}\right.\right)\right\vert_{n\subscript{M}=n\subscript{M}^{(t)}}} {\label{Problem_BCD_n_K_origi}}{}{}
	\addConstraint{\eqref{con:err_message_thresholds} - \eqref{con:throughput_threshold}.\nonumber}
\end{maxi!}

Next, we leverage the \ac{fp} \cite{shen2018fractional} to further decouple the problem. In this way, Problem (\ref{Problem_BCD_n_K_origi}) is equivalent to the following problem:
\begin{maxi!}
    {n\subscript{K}\in\mathbb{R}^+,y}{\left.\hat{f}^{(t)}\left(n\subscript{K},y\left\vert\hat{n}\subscript{M}^{(q)},\hat{n}\subscript{K}^{(q)}\right.\right)\right\vert_{n\subscript{M}=n\subscript{M}^{(t)}}} {\label{Problem_BCD_n_K}}{}{}
    \addConstraint{\eqref{con:err_message_thresholds} - \eqref{con:throughput_threshold},\nonumber}
\end{maxi!}
where
\begin{equation}\label{eq:f_t}
\begin{split}
&\hat{f}^{(t)}\left(n\subscript{K},y\vert \hat{n}\subscript{M}^{(q)},\hat{n}\subscript{K}^{(q)}\right)\\
&=2y\sqrt{\left[1-\left(1-\hat{\varepsilon}\subscript{Bob,M}^{(t)}(\hat{n}\subscript{M}^{(q)},\hat{n}\subscript{K}^{(q)})\right)\varepsilon\subscript{Bob,K}\right]} \\
&-y^2\frac{1}{\left((1-\varepsilon\subscript{Eve,M}^{(t)})\hat{\varepsilon}\subscript{Eve,K} (\hat{n}\subscript{M}^{(q)}, \hat{n}\subscript{K}^{(q)})\right)}. 
 \end{split}
\end{equation}
\vspace{-4mm}
\begin{theorem}
\label{theorem:f_t_concave}    
Eq.~\eqref{eq:f_t} is concave in $n\subscript{K}$ for fixed $y$.
\end{theorem}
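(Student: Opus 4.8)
The plan is to exploit that, for fixed $y$ and with $n\subscript{M}$ frozen at $n\subscript{M}^{(t)}$, every quantity appearing in \eqref{eq:f_t} except $\varepsilon\subscript{Bob,K}$ is a constant with respect to $n\subscript{K}$. Indeed, $\hat{\varepsilon}\subscript{Bob,M}^{(t)}(\hat{n}\subscript{M}^{(q)},\hat{n}\subscript{K}^{(q)})$ and $\hat{\varepsilon}\subscript{Eve,K}(\hat{n}\subscript{M}^{(q)},\hat{n}\subscript{K}^{(q)})$ are the surrogate values from Lemma~\ref{lemma:approximation} evaluated at the fixed expansion point, while $\varepsilon\subscript{Eve,M}^{(t)}$ is evaluated at $n\subscript{M}=n\subscript{M}^{(t)}$; hence the subtrahend $y^2/\big[(1-\varepsilon\subscript{Eve,M}^{(t)})\hat{\varepsilon}\subscript{Eve,K}(\hat{n}\subscript{M}^{(q)},\hat{n}\subscript{K}^{(q)})\big]$ is a constant in $n\subscript{K}$, and all $n\subscript{K}$-dependence sits inside the square root. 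Setting $\kappa\triangleq 1-\hat{\varepsilon}\subscript{Bob,M}^{(t)}(\hat{n}\subscript{M}^{(q)},\hat{n}\subscript{K}^{(q)})\in[0,1]$ and $A(n\subscript{K})\triangleq 1-\kappa\,\varepsilon\subscript{Bob,K}(n\subscript{K})\in[0,1]$, the claim reduces to showing that $n\subscript{K}\mapsto 2y\sqrt{A(n\subscript{K})}$ is concave; it suffices to treat $y\geqslant 0$, the range over which the quadratic-transform auxiliary variable is optimized (its optimal value being the nonnegative ratio of the numerator $\sqrt{A}$ to the denominator).

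The crux is to prove that $\varepsilon\subscript{Bob,K}(n\subscript{K})$ is convex in $n\subscript{K}$ on the feasible set. By the \ac{fbl} expression, $\varepsilon\subscript{Bob,K}=Q(g)$ with $g=g(n\subscript{K})=\frac{\ln 2}{\sqrt{V(\gamma\bob)}}\big(\mathcal{C}(\gamma\bob)\sqrt{n\subscript{K}}-d\subscript{K}\,n\subscript{K}^{-1/2}\big)$, in which $\mathcal{C}(\gamma\bob)>0$ and $V(\gamma\bob)\in(0,1)$ do not depend on $n\subscript{K}$. Differentiating twice and using $\phi'(x)=-x\phi(x)$ for the standard Gaussian density $\phi$ gives $\varepsilon\subscript{Bob,K}''=\phi(g)\big[g\,(g')^2-g''\big]$. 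Since $g$ is a positive combination of $\sqrt{n\subscript{K}}$ and $-n\subscript{K}^{-1/2}$, a direct computation shows $g'>0$ and $g''<0$ for every $n\subscript{K}>0$; moreover the reliability requirement $\varepsilon\subscript{Bob,K}\leqslant\varepsilon\superscript{th}\subscript{Bob,K}$, with $\varepsilon\superscript{th}\subscript{Bob,K}\leqslant\frac{1}{2}$ at any useful operating point, is equivalent to $g\geqslant 0$, so $g\geqslant 0$ on the feasible region. Hence $g\,(g')^2-g''\geqslant 0$, whence $\varepsilon\subscript{Bob,K}''\geqslant 0$ and $\varepsilon\subscript{Bob,K}$ is convex there.

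It remains to assemble the pieces. As $\kappa\geqslant 0$, the map $A(n\subscript{K})=1-\kappa\,\varepsilon\subscript{Bob,K}(n\subscript{K})$ is concave and $[0,1]$-valued; composing it with $t\mapsto\sqrt{t}$, which is concave and nondecreasing on $[0,\infty)$, shows $n\subscript{K}\mapsto\sqrt{A(n\subscript{K})}$ is concave. Scaling by the nonnegative constant $2y$ preserves concavity and subtracting the $n\subscript{K}$-independent term does not affect it, so $\hat{f}^{(t)}\big(n\subscript{K},y\,\vert\,\hat{n}\subscript{M}^{(q)},\hat{n}\subscript{K}^{(q)}\big)\big\vert_{n\subscript{M}=n\subscript{M}^{(t)}}$ is concave in $n\subscript{K}$. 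The main obstacle is the middle step: the $Q$-function is not globally convex, so the convexity of the \ac{fbl} error probability genuinely rests on the sign pattern $g'>0$, $g''<0$ of the explicit $\sqrt{n\subscript{K}}$-minus-$n\subscript{K}^{-1/2}$ profile together with the fact that feasibility pins the $Q$-argument to the region $g\geqslant 0$; the reduction to constants and the outer $\sqrt{\cdot}$-composition are then routine.
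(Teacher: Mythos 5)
Your reduction of the first term is fine, and your convexity argument for $\varepsilon_{\mathrm{Bob,K}}$ (via $\varepsilon'' = \phi(g)\left[g\,(g')^2 - g''\right]$ with $g\geqslant 0$ enforced by the reliability constraint) matches the paper's computation and is in fact slightly more careful about why the $Q$-argument is nonnegative. However, there is a genuine gap in your treatment of the second term. You assert that $\hat{\varepsilon}_{\mathrm{Eve,K}}\bigl(\hat{n}_{\mathrm{M}}^{(q)},\hat{n}_{\mathrm{K}}^{(q)}\bigr)$ is a constant because it is "evaluated at the fixed expansion point." That is a misreading of the notation: by Lemma~\ref{lemma:approximation}, $\hat{\varepsilon}_{\mathrm{Eve,K}} = 1 - b(-\hat{\omega})e^{-a(-\hat{\omega})\,\omega_{\mathrm{K}}} - c(-\hat{\omega})$, where only the coefficients $a,b,c$ (through $\hat{\omega}$) are pinned to the expansion point, while $\omega_{\mathrm{K}}$ remains a function of the free variable $n_{\mathrm{K}}$. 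It is the MM surrogate \emph{function}, touching the true $\varepsilon_{\mathrm{Eve,K}}$ at $n_{\mathrm{K}}=\hat{n}_{\mathrm{K}}^{(q)}$; if it were a constant, the tightness claim in Lemma~\ref{lemma:approximation} and the update $\hat{\varepsilon}_{\mathrm{Eve,K}}(n_{\mathrm{K}}^{(i-1)})$ in Algorithm~\ref{alg:framework} would be meaningless. Consequently the subtrahend $y^2\big/\bigl[(1-\varepsilon_{\mathrm{Eve,M}}^{(t)})\hat{\varepsilon}_{\mathrm{Eve,K}}(n_{\mathrm{K}})\bigr]$ does depend on $n_{\mathrm{K}}$, and your proof never addresses it.

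Closing the gap requires the substantive second half of the paper's argument: one shows $\hat{\varepsilon}_{\mathrm{Eve,K}}$ is positive, increasing, and \emph{concave} in $n_{\mathrm{K}}$, using
$\frac{\partial^2 \hat{\varepsilon}}{\partial n_{\mathrm{K}}^2} = \frac{\partial^2\hat{\varepsilon}}{\partial \omega_{\mathrm{K}}^2}\bigl(\frac{\partial \omega_{\mathrm{K}}}{\partial n_{\mathrm{K}}}\bigr)^2 + \frac{\partial\hat{\varepsilon}}{\partial \omega_{\mathrm{K}}}\frac{\partial^2 \omega_{\mathrm{K}}}{\partial n_{\mathrm{K}}^2}\leqslant 0$,
since $\frac{\partial\hat{\varepsilon}}{\partial\omega} = ab\,e^{-a\omega}>0$, $\frac{\partial^2\hat{\varepsilon}}{\partial\omega^2} = -a^2b\,e^{-a\omega}<0$, $\frac{\partial\omega_{\mathrm{K}}}{\partial n_{\mathrm{K}}}\geqslant 0$ and $\frac{\partial^2\omega_{\mathrm{K}}}{\partial n_{\mathrm{K}}^2}\leqslant 0$. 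Then $t\mapsto 1/t$ being convex and decreasing on $t>0$ composed with the concave positive $\hat{\varepsilon}_{\mathrm{Eve,K}}$ makes $1/\hat{\varepsilon}_{\mathrm{Eve,K}}$ convex, so $-y^2/\bigl[(1-\varepsilon_{\mathrm{Eve,M}}^{(t)})\hat{\varepsilon}_{\mathrm{Eve,K}}\bigr]$ is concave. Note this step needs the exponential structure of the surrogate precisely because the raw $Q$-function bound would not behave this way; your proof as written silently discards this entire obstacle, so it is incomplete rather than merely taking a different route.
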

\vspace{-3mm}
\begin{proof}
	Omitted due to length limitation, see \cite{chen2024physical}.
\end{proof}
\vspace{-1mm}
In Problem \eqref{Problem_BCD_n_K}, we introduce an additional variable $y$ and construct the quadratic transform which is concave for fixed $y$ and $n\subscript{K}$ separately. Thus, we can solve this problem via the \ac{bcd} method and find the optimal solution $n\subscript{K}\superscript{o}$ efficiently.

On the other hand, we have the second decomposed problem in the $t\superscript{th}$ iteration by fixing $n\subscript{K}$:
\begin{maxi!}
	{n\subscript{M}\in\mathbb{R}^+}{\left.\hat{R}\subscript{d}^{(t)}\left(n\subscript{M}\left\vert \hat{n}\subscript{M}^{(q)},\hat{n}\subscript{K}^{(q)}\right.\right)\right\vert_{n\subscript{K}=n\subscript{K}^{(t)}}} {\label{Problem_BCD_n_M_origi}}{}{}
    \addConstraint{\eqref{con:err_message_thresholds} - \eqref{con:throughput_threshold}.\nonumber}
\end{maxi!}
Similarly, Problem (\ref{Problem_BCD_n_M_origi}) can be reformulated as:
    \begin{maxi!}
	{n\subscript{M}\in\mathbb{R}^+,y}{\left.\hat{g}^{(t)}\left(n\subscript{M},y\left\vert\hat{n}\subscript{M}^{(q)},\hat{n}\subscript{K}^{(q)}\right.\right)\right\vert_{n\subscript{K}=n\subscript{K}^{(t)}}} {\label{Problem_gt_n_M}}{}{}
    \addConstraint{\eqref{con:err_message_thresholds} - \eqref{con:throughput_threshold},\nonumber}
\end{maxi!}
\begin{equation}\label{eq:g_t}
\begin{split}
&\hat{g}^{(t)}\left(n\subscript{M},y\vert \hat{n}\subscript{M}^{(q)},\hat{n}\subscript{K}^{(q)}\right)\\
&=2y\sqrt{\left((1-\varepsilon\subscript{Eve,M})\hat{\varepsilon}\subscript{Eve,K}\superscript{(t)} (\hat{n}\subscript{M}^{(q)}, \hat{n}\subscript{K}^{(q)})\right)} \\
&-y^2\frac{1}{\left[1-\left(1-\hat{\varepsilon}\subscript{Bob,M}(\hat{n}\subscript{M}^{(q)},\hat{n}\subscript{K}^{(q)})\right)\varepsilon\subscript{Bob,K}^{(t)}\right]}. 
 \end{split}
\end{equation}
\vspace{-3mm}
\begin{theorem}
\label{theorem:g_t_concave}
Eq.~\eqref{eq:g_t} is concave in $n\subscript{M}$ for fixed $y$.
\end{theorem}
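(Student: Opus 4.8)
The argument mirrors, by the evident $n\subscript{M}$--$n\subscript{K}$ symmetry, the one behind Theorem~\ref{theorem:f_t_concave}. Fix $y>0$ (the case $y\leqslant 0$ is not needed by the quadratic transform). Since $n\subscript{K}$ is held at $n\subscript{K}^{(t)}$ in Problem~\eqref{Problem_gt_n_M}, the quantities $\hat{\varepsilon}\subscript{Eve,K}\superscript{(t)}(\hat{n}\subscript{M}^{(q)},\hat{n}\subscript{K}^{(q)})>0$ and $\varepsilon\subscript{Bob,K}^{(t)}\in(0,1)$ are constants, so the whole $n\subscript{M}$-dependence of \eqref{eq:g_t} enters only through the true error probability $\varepsilon\subscript{Eve,M}(n\subscript{M})$ inside the square root and through the surrogate $\hat{\varepsilon}\subscript{Bob,M}(n\subscript{M})$ of Lemma~\ref{lemma:approximation} inside the reciprocal. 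Writing $c_1\triangleq 2y\sqrt{\hat{\varepsilon}\subscript{Eve,K}\superscript{(t)}}>0$ and $c_2\triangleq y^2>0$, Eq.~\eqref{eq:g_t} becomes $\hat{g}^{(t)}(n\subscript{M})=c_1\sqrt{1-\varepsilon\subscript{Eve,M}(n\subscript{M})}-c_2\big[1-(1-\hat{\varepsilon}\subscript{Bob,M}(n\subscript{M}))\,\varepsilon\subscript{Bob,K}^{(t)}\big]^{-1}$, and it suffices to show that each summand is concave in $n\subscript{M}$ on the feasible region.

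For the first summand I would show that $\varepsilon\subscript{Eve,M}(n\subscript{M})=Q\!\big(f(n\subscript{M})\big)$, with $f(n\subscript{M})=\sqrt{n\subscript{M}/V(\gamma\eve)}\,\big(\mathcal{C}(\gamma\eve)-d\subscript{M}/n\subscript{M}\big)\ln 2$, is convex over the operationally relevant sub-capacity regime. Differentiating twice and using $\phi'(x)=-x\phi(x)$ for the standard Gaussian density $\phi$ gives $\varepsilon\subscript{Eve,M}''(n\subscript{M})=\phi(f)\big[f\,(f')^{2}-f''\big]$; one checks that $f'>0$ and $f''<0$ for $n\subscript{M}>0$, while $f(n\subscript{M})>0$ exactly when $n\subscript{M}>d\subscript{M}/\mathcal{C}(\gamma\eve)$, a condition enforced on the feasible set by $\varepsilon\subscript{Eve,M}\leqslant\varepsilon\superscript{th}\subscript{Eve,M}$ in \eqref{con:err_message_thresholds} (with the small threshold that makes deception meaningful). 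Hence the bracket is strictly positive, $\varepsilon\subscript{Eve,M}$ is convex, $1-\varepsilon\subscript{Eve,M}$ is concave and non-negative, and composing it with the concave, non-decreasing map $u\mapsto\sqrt{u}$ keeps $\sqrt{1-\varepsilon\subscript{Eve,M}(n\subscript{M})}$ concave; scaling by $c_1>0$ preserves this.

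For the second summand set $D(n\subscript{M})\triangleq 1-(1-\hat{\varepsilon}\subscript{Bob,M}(n\subscript{M}))\,\varepsilon\subscript{Bob,K}^{(t)}=(1-\varepsilon\subscript{Bob,K}^{(t)})+\varepsilon\subscript{Bob,K}^{(t)}\,\hat{\varepsilon}\subscript{Bob,M}(n\subscript{M})$, an affine function of $\hat{\varepsilon}\subscript{Bob,M}(n\subscript{M})$ with positive slope and positive offset, so $D(n\subscript{M})>0$. By Lemma~\ref{lemma:approximation}, $\hat{\varepsilon}\subscript{Bob,M}(n\subscript{M})$ has the closed form $1-b(-\hat{\omega})e^{-a(-\hat{\omega})\omega\subscript{M}}-c(-\hat{\omega})$ with anchor-dependent coefficients $a(-\hat{\omega}),b(-\hat{\omega})>0$; since $\omega\subscript{M}$ is increasing and concave in $n\subscript{M}$, the map $n\subscript{M}\mapsto -b(-\hat{\omega})e^{-a(-\hat{\omega})\omega\subscript{M}}$ is concave, and hence so are $\hat{\varepsilon}\subscript{Bob,M}$ and $D$. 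Finally, $D\mapsto -1/D$ is concave and increasing on $D>0$, so its composition with the concave $D(n\subscript{M})$, scaled by $c_2>0$, is concave. Adding the two concave summands gives concavity of \eqref{eq:g_t} in $n\subscript{M}$.

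The delicate step is the first summand: it requires genuine convexity of the finite-blocklength error $\varepsilon\subscript{Eve,M}(n\subscript{M})$ itself rather than of a surrogate, and this fails once the $Q$-argument turns non-positive (i.e.\ when $\varepsilon\subscript{Eve,M}>\tfrac12$), where the curvature sign can change and the composition with $\sqrt{\cdot}$ need not preserve concavity. The proof must therefore lean on the feasibility constraints to confine the analysis to the sub-capacity regime $n\subscript{M}>d\subscript{M}/\mathcal{C}(\gamma\eve)$. A secondary check is that the change of variables $n\subscript{M}\mapsto\omega\subscript{M}$ underlying the Lemma~\ref{lemma:approximation} surrogate is concavity-preserving (affine, in the present construction), so that the exponential term is concave as a function of $n\subscript{M}$ and not merely of $\omega\subscript{M}$.
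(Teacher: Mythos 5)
Your proof is correct and takes essentially the same route as the paper, which reduces the claim to the convexity of $\varepsilon\subscript{Eve,M}$ and the concavity of the surrogate $\hat{\varepsilon}\subscript{Bob,M}$ in $n\subscript{M}$ (both established in the appendix proof of Theorem~\ref{theorem:f_t_concave}) and then applies the same composition rules to the two summands of Eq.~\eqref{eq:g_t}. You are in fact more explicit than the paper about needing the sub-capacity regime (non-negative $Q$-function argument, enforced by the feasibility constraints) for the convexity of the true error probability; your only slip is the parenthetical claim that $n\subscript{M}\mapsto\omega\subscript{M}$ is affine --- it is increasing and concave, as your own main argument correctly uses --- which does not affect the result.
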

\vspace{-3mm}
\begin{proof}
	Omitted due to length limitation, see \cite{chen2024physical}.
\end{proof}
\vspace{-2mm}
Therefore, we can also solve Problem ($\ref{Problem_gt_n_M}$) via \ac{bcd} approach and obtain the optimal solution $n\subscript{M}\superscript{o}$.

\subsection{Optimization Algorithm}\label{subsec:algorithm}

\begin{algorithm}[!hbp]
	\scriptsize
	\SetAlgoLined
	Input: $\mu\subscript{BCD},\mu\subscript{MM}, \mu \subscript{FP}, T,Q,I,J, P, n\subscript{M}, n\subscript{K}$ \\
	Initialize: $t=1, q=1, i=1, j=1, n\subscript{M}\superscript{(0)}=n\subscript{M}\superscript{init}, n\subscript{K}\superscript{(0)}=n\subscript{K}\superscript{init} ,
	\hat{n}\subscript{M}\superscript{(0)}=\hat{n}\subscript{M}\superscript{init}, \hat{n}\subscript{K}\superscript{(0)}=\hat{n}\subscript{K}\superscript{init} ,
	R\subscript{d}\superscript{(0)}=-\infty$ \\
	\Do{$\frac{\hat{R}^{(t)}\subscript{d}\left(\hat{n}^{(q)}\subscript{M},\hat{n}^{(q)}\subscript{K}\right)-\hat{R}^{(t)}\subscript{d}\left(\hat{n}^{(q-1)}\subscript{M},\hat{n}^{(q-1)}\subscript{K}\right)}{\hat{R}^{(t)}\subscript{d}\left(\hat{n}^{(q-1)}\subscript{M},\hat{n}^{(q-1)}\subscript{K}\right)}> \mu\subscript{MM}$
		}{\eIf{
			$q\leqslant Q$
		}{
			$t\leftarrow 1$ (reset index $t$)\\
			$\hat{f}^{(t)}:= \hat{f}\left(\hat{n}^{(q)}\subscript{M},\hat{n}^{(q)}\subscript{K}\right)$, 
                $\hat{g}\superscript{(t)}:= \hat{g}\left(\hat{n}^{(q)}\subscript{M},\hat{n}^{(q)}\subscript{K}\right)$ \\
			\Do{
				$\frac{\hat{R}\subscript{d}^{(t)}-\hat{R}\subscript{d}^{(t-1)}}{\hat{R}\subscript{d}^{(t-1)}}> \mu\subscript{BCD}$
				}{
				\eIf{
					$t\leqslant T$
				}{\Do{$\frac{\hat{f}^{(i)}-\hat{f}^{(i-1)}}{\hat{f}^{(i-1)}}>\mu \subscript{FP}$}{$i\leftarrow 1 (\text{reset index} \;i)$ \\
    \eIf{$i \leqslant I$}{$y^{(i)}= \frac{\sqrt{\left[1-\left(1-\hat{\varepsilon}\subscript{Bob,M}^{(t-1)}\right)\varepsilon\subscript{Bob,K}(n\subscript{K}^{(i-1)})\right]}}{\left(1-\varepsilon\subscript{Eve,M}^{(t-1)})\hat{\varepsilon}\subscript{Eve,K}(n\subscript{K}^{(i-1)}) \right)}$ \\
                        $n\subscript{K}^{(i)}\leftarrow \arg \underset{n\subscript{K}}{\max}\; \hat{f}\left(y^{(i)},n\subscript{M}^{(t-1)}\right)$ \\
                        $i\leftarrow i+1$} {\textbf{break}}} 
                        $n\subscript{K}^{(t)} \leftarrow n\subscript{K}^{(i)}$

                        \Do{$\frac{\hat{g}^{(i)}-\hat{g}^{(i-1)}}{\hat{g}^{(i-1)}}>\mu \subscript{FP}$}{$j \leftarrow 1 (\text{reset index} \; j)$ \\
                        \eIf{$j \leqslant J$}{$y^{(j)}= \frac{\sqrt{\left(1-\varepsilon\subscript{Eve,M}^{(j-1)})\hat{\varepsilon}\subscript{Eve,K}(n\subscript{K}^{(t)}) \right)}}{\left[1-\left(1-\hat{\varepsilon}\subscript{Bob,M}^{(j-1)}\right)\varepsilon\subscript{Bob,K}(n\subscript{K}^{(t)})\right]}$ \\
                        $n\subscript{M}^{(j)}\leftarrow \arg \underset{n\subscript{M}}{\max}\; \hat{g}\left(y^{(j)},n\subscript{K}^{(t)}\right)$ \\
                        $j\leftarrow j+1$} {\textbf{break}}
                        }
                        $n\subscript{M}^{(t)}\leftarrow n\subscript{M}^{(j)}$
    
					$\hat{R}\subscript{d}^{(t)}\leftarrow \hat{R}\subscript{d}\left(n\subscript{M}^{(t)},n\subscript{K}^{(t)}\right)$,
					$t \leftarrow t+1$
				}{
				\textbf{break}
				}
			}
			$\hat{n}\subscript{M}^{(q)}\leftarrow n\subscript{M}^{(t)}$, 
			$\hat{n}\subscript{K}^{(q)}\leftarrow n\subscript{K}^{(t)}$,
			$q\leftarrow q+1$
		}{
		\textbf{break}
		}
	}
	$n\subscript{M}\superscript{*}\leftarrow \arg \underset{n \in \left\{\left\lfloor n\subscript{M}^{(q)} \right\rfloor, \left\lceil n\subscript{M}^{(q)} \right\rceil\right\}}{\max} R\subscript{d}\left(n\subscript{K}^{(q)}\right)$ \\
	$n\subscript{K}\superscript{*}\leftarrow \arg \underset{n \in \left\{\left\lfloor n\subscript{K}^{(q)} \right\rfloor, \left\lceil n\subscript{K}^{(q)} \right\rceil\right\}}{\max} R\subscript{d}\left(n\subscript{M}^{(q)}\right)$ \\
	\KwRet{$\left(n\subscript{M}\superscript{*}, n\subscript{K}\superscript{*}\right)$}
\caption{The proposed \ac{mm}-\ac{bcd}-\ac{fp} framework}
\label{alg:framework}
\end{algorithm}
Based on the above analyses, we propose an algorithm with three layers of iterations. In the $q\superscript{th}$ iteration, we approximate $\hat{R}\subscript{d}^{(q)}:=\hat{R}\subscript{d}\left(\hat{n}\subscript{M}\superscript{(q)},\hat{n}\subscript{K}\superscript{(q)}\right)$. Next, we fix the value of $n\subscript{M}$ and $n\subscript{K}$ respectively in $t\superscript{th}$ iteration. In this way, we can solve the single-variable problem via \ac{fp} approach, which is equivalent to solving problem \eqref{Problem_BCD_n_K} and \eqref{Problem_gt_n_M}. In particular, in the inner $i\superscript{th}$ iteration for fixed $n\subscript{M}^{(t)}$, the optimal $y^{(i)}$ can be found in a closed form for fixed $n\subscript{K}^{(i-1)}$. And $n\subscript{K}^{(i)}$ can be updated by solving the reformulated convex optimization problem. Furthermore, we can use the same way to calculate the optimal $n\subscript{M}^{(t)}$. Then the local point 
will be assigned to $\left(n\subscript{M}^{(q+1)}, n\subscript{K}^{(q+1)}\right)$ for the next
iteration. The process repeats until the relative error is less than the threshold or the maximum number of iterations is achieved.

Specifically, the initial values of $n\subscript{M}$ and $n\subscript{K}$ should be feasible for Problem \eqref{Problem_orig}. Besides, since both $n\subscript{M}$ and $n\subscript{K}$ must be integers, the optimal solution must be determined by comparing the integer neighbors of $n\subscript{M}^{*}$ and $n\subscript{K}^{*}$. \revise{}{This approach to solve Problem~\eqref{Problem_SCA} is described in Algorithm~\revise{}{\ref{alg:framework}}. The method can achieve near-optimal solutions with a complexity of $\mathcal{O}\left(\phi\left(8N^3\right)\right)$, where $N$ denotes the number of variables in Problem~\eqref{Problem_SCA} and $\phi(\cdot)$ signifies the number of iterations based on the accuracy of the solution.}

\vspace{-1mm}
\section{Numerical Evaluation}\label{sec:evaluation}
To validate our analyses and evaluate the proposed approach, we conducted a series of numerical experiments. The general parameters of the simulation setup are presented in Tab. \ref{tab:sim_setup}, while task-specific parameters will be explained later.

First, we set the transmission power $P=\SI{5}{\milli \watt}$ for both ciphertext and key, under the condition that $z\bob=\SI{0}{\dB}$ and $z\eve=\SI{-10}{\dB}$. We calculated $R_d$ in the region $(n\subscript{M}, n\subscript{K}) \in [16, 128]\times \ [16, 128]$ with $T\superscript{th}=0.1$ bps. The result in Fig. \ref{fig:deception_rate-surface} illustrates the concavity of the deception rate surface in the feasible region, which is constrained by (\ref{con:err_message_thresholds}-\ref{con:err_key_thresholds}) and highlighted with greater opacity compared to the rest. However, the behavior related to convexity or concavity beyond this region seems to be more complex.

To verify the effectiveness of the proposed \ac{bcd} algorithm, we conducted Monte-Carlo simulations, where we set $T\superscript{th}=0.1$ bps with transmission power $P=\SI{5}{\milli\watt}$, $z\subscript{\bob}=\SI{-5}{\dB}$, $z\subscript{\eve}=\SI{-15}{\dB}$. Fig. \ref{fig:bcd} illustrates the search path with $d_M=16$ bits and $d_M=24$ bits and proves that the \ac{bcd} algorithm converges at the optimum.

\begin{table}[!htpb]
	\centering
	\caption{Simulation setup}
	\label{tab:sim_setup}
	\begin{tabular}{m{1.2cm} m{1.5cm} m{4.3cm}}
		\toprule[2px]
		\textbf{Parameter} 		&\textbf{Value} 		&\textbf{Remark}\\
		\midrule[1px]
		$\sigma^2$					&\SI{1}{\milli\watt}			&Noise power\\
  \rowcolor{gray!30}
		$B$								 &\SI{1}{\hertz}			&Normalized to unity bandwidth\\
		$d\subscript{M}$								 &16 bits							 &Length of ciphertext\\
  \rowcolor{gray!30}
        $d\subscript{K}$								 &16 bits							 &Length of key\\

		$\varepsilon\subscript{Bob,M}\superscript{th}$ $\varepsilon\subscript{Bob,K}\superscript{th}$ $\varepsilon\subscript{Eve,M}\superscript{th}$ $\varepsilon\subscript{Eve,K}\superscript{th}$		&0.5							 &Thresholds in constraints \eqref{con:err_message_thresholds}--\eqref{con:err_key_thresholds}\\
        \rowcolor{gray!30}
		$\xi\subscript{MM}$							&$2\times 10^{-16}$	&\ac{mm} convergence threshold\\
        $\xi\subscript{BCD}$							&$2\times 10^{-16}$	&\ac{bcd} convergence threshold\\
         \rowcolor{gray!30} $\xi\subscript{FP}$							&$2\times 10^{-16}$	&\ac{fp} convergence threshold \\
		$K$								&100							&Maximal number of iterations in \ac{bcd}\\
		\bottomrule[2px]
	\end{tabular}
\end{table}

To evaluate the secrecy and deception performance of our proposed methods, we calculated $\varepsilon_{LF}$ and $R_d$ under varying eavesdropping channel gain $z\subscript{\eve}$. In this experiment, we set $P=\SI{5}{\milli \watt}$, $z\subscript{\bob}=\SI{0}{\dB}$, and $T\superscript{th}=0.05$. We also tested the conventional \ac{pls} method as a baseline, which minimizes $\varepsilon\subscript{\lf}$ with respect to $n\subscript{M}$ without deceptive ciphering ($d\subscript{K}=0$). The results are represented in Fig. \ref{fig:sensitivity-zeve}, which demonstrate that better eavesdropping channel condition enhances the deception performance. Although the $\varepsilon\subscript{\lf}$ of our method increases with the growth of $z\subscript{\eve}$, it still remains at a very low value and performs closely to the baseline.

Next, we set $z\subscript{\eve}=\SI{-15}{\dB}$, $z\subscript{\bob}=\SI{0}{\dB}$, $T\subscript{th}=0.05$ bps to test the performance w.r.t. the transmission power. The results are shown in Fig. \ref{fig:sensitivity-p}. which indicates that the deception performance benefits from higher transmission power, while the $\varepsilon\subscript{\lf}$ rises logarithmically slowly as the transmission power increases.

Fig. \ref{fig:sensitivity-raw_packet_rate} shows the sensitivity of $\varepsilon_\text{LF}$ to the raw packet rate, which is tested under $z\subscript{\bob}=\SI{0}{\dB}$, $z\subscript{\eve}=\SI{-10}{\dB}$, and $T\superscript{th}=0.05$ bps. Compared with the conventional \ac{pls} method, our \ac{pld} method benefits from a lower raw data rate. As the raw packet rate increases, the deception performance degrades, while the leakage failure probability rises.

The outcome of a comprehensive benchmark test is depicted in Fig \ref{fig:benchmark}, where we combined $z\subscript{\eve}$ and transmission power. We kept the setup $T\superscript{th}=0.05$. The deception rate rises with better eavesdropping channel conditions and higher transmission power. Regarding the $\varepsilon\subscript{\lf}$, the \ac{pld} method performs closely to the conventional \ac{pls} method. The $\varepsilon\subscript{\lf}$ gets larger as the $z\subscript{\eve}$ and transmission power increase.

\begin{figure}[!htpb]
	\centering
	\vspace{-5mm}
	\includegraphics[width=\linewidth]{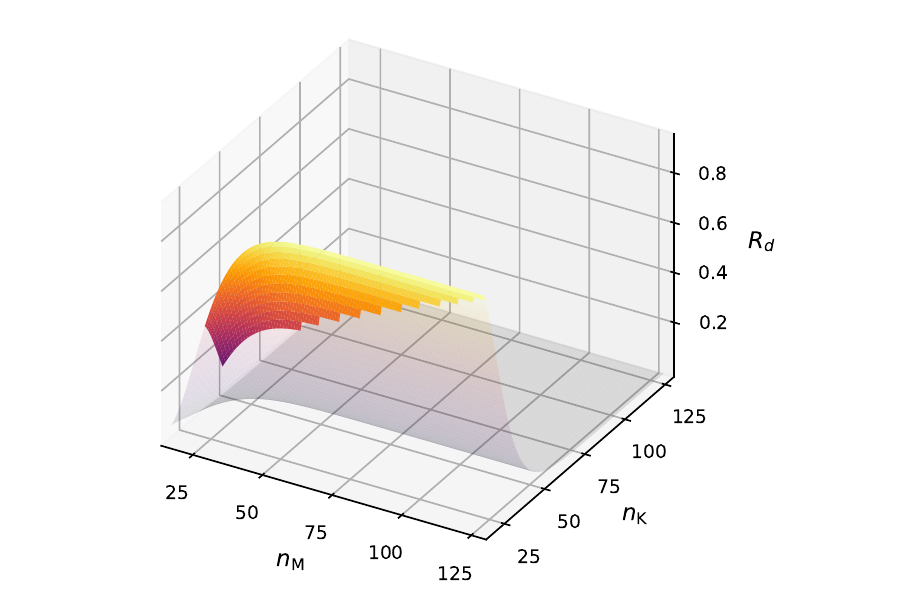}
	\vspace{-8mm}
	\caption{Deception rate with $T\superscript{th}\lf=0.1$ bps. }
	\label{fig:deception_rate-surface}
\end{figure}
\begin{figure}[!htpb]
	\centering
	\includegraphics[width=\linewidth]{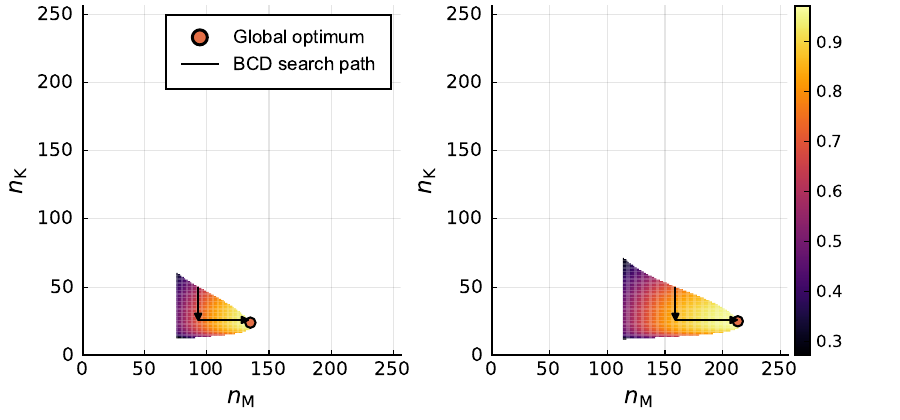}
	\caption{The $R_d$ surface and the search path with $d\subscript{M}=16$ bits (left) and $d\subscript{M}=24$ bits (right).}
	\label{fig:bcd}
\end{figure}

\section{Conclusion}\label{sec:conclusion}
In this work, we have investigated the performance of our proposed \ac{pld} framework with \ac{ofdm}. By jointly optimizing the coding rate of the ciphertext and the key, we maximized the effective deception rate while maintaining a specified throughput constraint, thereby ensuring both secure and efficient communication. We have proved the convexity of the objective function and proposed an efficient algorithm to solve the related optimization problem. The comprehensive numeral simulation results have demonstrated that our approach introduced high deception rate without compromising security compared with the conventional \ac{pls} method.


\begin{figure}[!htpb]
	\centering
	\includegraphics[width=.8\linewidth]{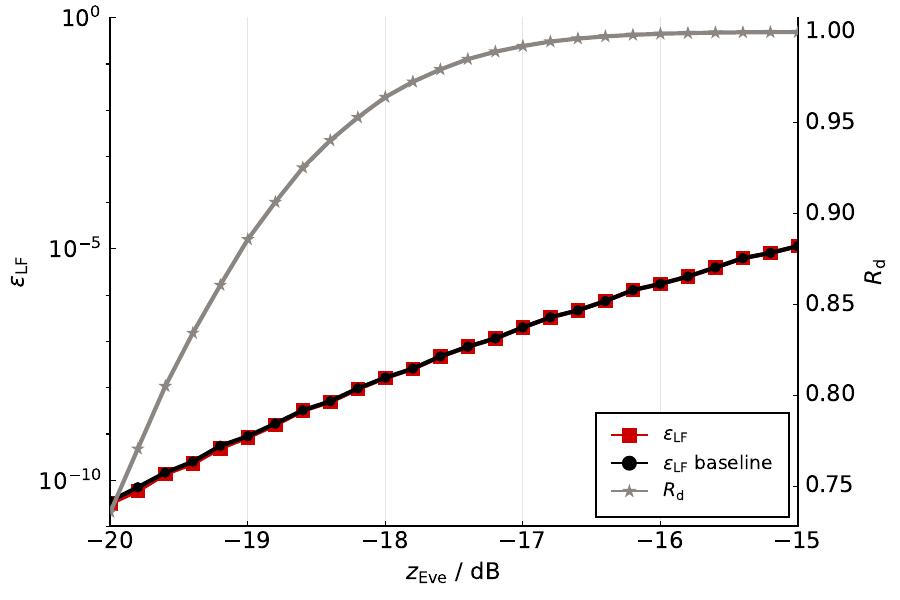}
	\vspace{-3mm}
	\caption{The impact of $z\subscript{\eve}$.}
	\label{fig:sensitivity-zeve}
\end{figure}
\begin{figure}[!htpb]
	\centering
	\includegraphics[width=.8\linewidth]{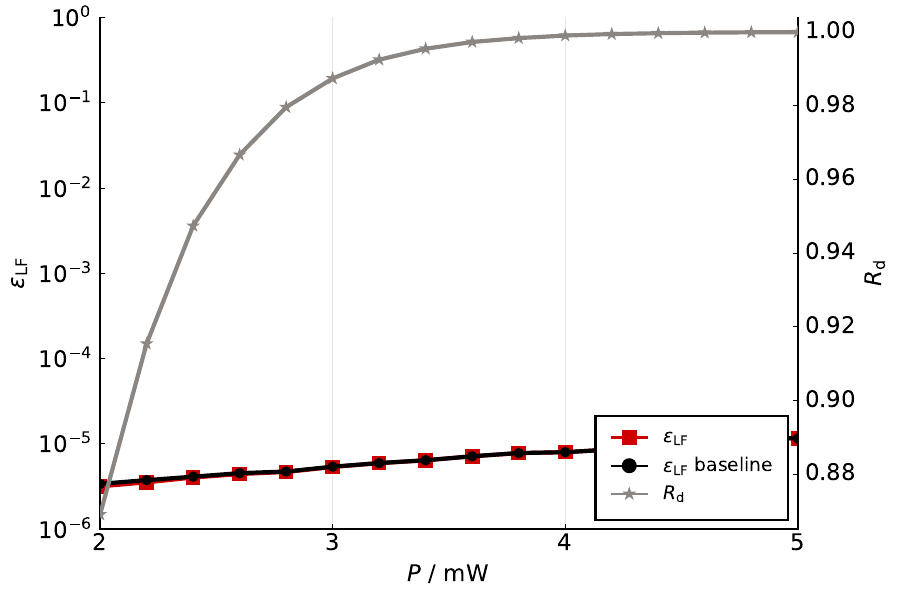}
	\vspace{-3mm}
	\caption{The impact of $P$.}
	\label{fig:sensitivity-p}
\end{figure}
\begin{figure}[!htpb]
	\centering
	\vspace{-3mm}
	\includegraphics[width=.8\linewidth]{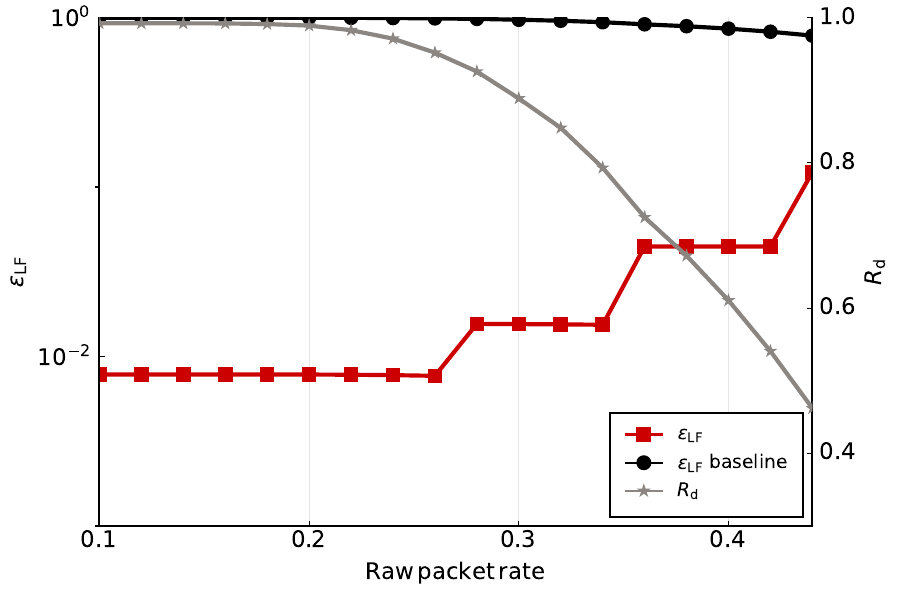}
	\vspace{-3mm}
	\caption{The impact of raw packet rate.}
	\label{fig:sensitivity-raw_packet_rate}
\end{figure}
\begin{figure}[!htpb]
	\centering
	\vspace{-3mm}
	\includegraphics[width=.9\linewidth]{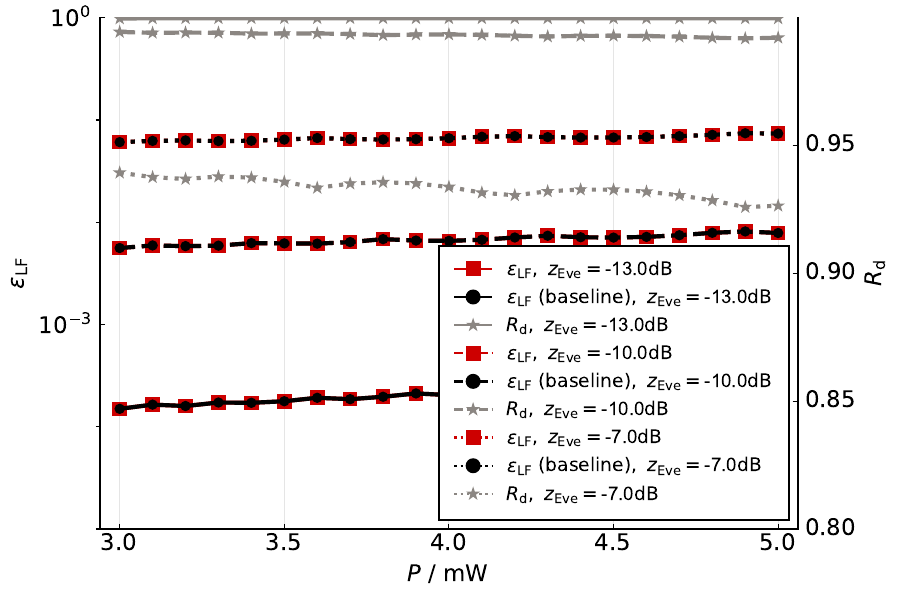}
	\vspace{-3mm}
	\caption{Benchmark results}
	\label{fig:benchmark}
\end{figure}

\section*{Acknowledgment}
This work is supported the German Federal Ministry of Education and Research in the programme of "Souverän. Digital. Vernetzt." joint projects 6G-ANNA (16KISK105/16KISK097) and Open6GHub (16KISK003K/16KISK004/16KISK012).


\bibliographystyle{IEEEtran}
\bibliography{references}

\clearpage
\appendices
\section{Proof of Lemma~\ref{lemma:approximation}}\label{app:lemma_approximation}
\begin{proof}
	Eq.~\eqref{Rd-approxi} can be obtained from the approximation of the Q-function according to lemma 3 in \cite{zhu2023trade}. We introduce and auxiliary function $\omega=\sqrt{\frac{n}{V(\lambda)}}\left(\mathcal{C}(\lambda)-\frac{d}{n}\right)\ln{2}$. For a given $\hat{w}\in \mathcal{R}$, Q-function is bounded by 
		$1\!-\!b(-\hat {\omega })e^{-a(-\hat {\omega })\omega }\!-\!c(-\hat {\omega })\leqslant Q(\omega)\leqslant b(\hat {\omega })e^{-a(\hat {\omega })\omega }+c(\hat {\omega })$,
	where
		$a(\hat {\omega })=\max \left\{{\frac {e^{-\frac {(\hat {\omega })^{2}}{2}}}{\sqrt {2\pi }Q(\hat {\omega })},\hat {\omega }}\right\}>0$,
		$b(\hat {\omega })=\frac {1}{\sqrt {2\pi }\hat a}e^{\hat a\hat {\omega }-\frac {(\hat {\omega })^{2}}{2}}>0$, 
	and
		$c(\hat {\omega })=Q(\hat {\omega })-\hat be^{-\hat a\hat {\omega }}$. 
	The equality is taken for $\omega=\hat{\omega}$.
\end{proof}

\section{Proof of Theorem~\ref{theorem:f_t_concave}}\label{app:theorem_f_t_concave}
\begin{proof}
    To prove the convexity of $\hat{f}^{(t)}$, we first investigate the monotonicity of $\varepsilon_{i,j}$ with respect to $n_j$. In particular, we have
    \begin{equation}
        \frac {\partial {\varepsilon_{i,j}}}{\partial{n_j}}=\frac {\partial {\varepsilon_{i,j}}}{\partial{w_{i,j}}}\frac {\partial {w_{i,j}}}{\partial{n_j}} \leqslant 0,
    \end{equation}
    where
    \begin{equation}
        \frac {\partial {\varepsilon_{i,j}}}{\partial{w_{i,j}}}=\frac {\partial \left ({\int ^{\infty} _{w_{i,j}} \frac {1}{\sqrt {2\pi }}e^{-\frac {t^{2}}{2}}dt }\right)}{\partial w_{i,j}}=-\frac {1}{\sqrt {2\pi }}e^{-\frac {w_{i,j}^{2}}{2}} < 0,
    \end{equation}
    \begin{equation}
        \frac {\partial {w_{i,j}}}{\partial {n_j}}=\frac {1}{2}n_j^{-\frac {1}{2}}V_{i,j}^{-\frac {1}{2}}\mathcal {C}_{i,j}\ln {2}+\frac {1}{2}n_j^{-\frac {3}{2}}V_{i,j}^{-\frac {1}{2}}d_j\ln {2}\geqslant 0.
    \end{equation}
    Thus, $\varepsilon_{i,j}$ is monotonically decreasing in $n_j$. Then, we further investigate the convexity of $\varepsilon_{i,j}$ with respect to $n_j$, we have
    \begin{equation}
    \begin{split}
        \frac{\partial^2 \varepsilon_{i,j}}{\partial n_{j}^2}=\frac{\partial^2 \varepsilon_{i,j}}{\partial w_{i,j}^2}\underbrace{\left(\frac {\partial {w_{i,j}}}{\partial{n_j}}\right)^2}_{\geqslant 0}+\underbrace{\frac{\partial \varepsilon_{i,j}}{\partial w_{i,j}}}_{< 0}\frac{\partial^2 w_{i,j}}{\partial n_j^2}\geqslant 0,
    \end{split}
    \end{equation}
    where
    \begin{equation}
        \frac {\partial ^{2}{\varepsilon_{i,j} }}{\partial {w_{i,j}^{2}}}=\frac {w_{i,j}}{\sqrt {2\pi }}e^{-\frac {w_{i,j}^{2}}{2}} \geqslant 0,
    \end{equation}
    \begin{equation}
        \frac {\partial ^{2}{w_{i,j}}}{\partial {n_j}^{2}}=-\frac {1}{4}n_j^{-\frac {3}{2}}V_{i,j}^{-\frac {1}{2}}\mathcal{C}_{i,j}\ln {2}\!-\!\frac {3}{4}n_j^{-\frac {5}{2}}V_{i,j}^{-\frac {1}{2}}d_j \ln {2}\leqslant 0.
    \end{equation}
    Therefore, $\varepsilon_{i,j}$ is convex in $n_j$. We can further prove the concavity of the first term in $\hat{f}^{(t)}$, where $\left[1-\left(1-\hat{\varepsilon}\subscript{Bob,M}^{(t)}(\hat{n}\subscript{M}^{(q)},\hat{n}\subscript{K}^{(q)})\right)\varepsilon\subscript{Bob,K}\right]$ is concave. Since the square-root function is concave and increasing, $2y\sqrt{\left[1-\left(1-\hat{\varepsilon}\subscript{Bob,M}^{(t)}(\hat{n}\subscript{M}^{(q)},\hat{n}\subscript{K}^{(q)})\right)\varepsilon\subscript{Bob,K}\right]}$ is concave.
    
    Next, we prove the concavity of $\hat{\varepsilon}_{i,j}$. The first derivative of $\hat{\varepsilon}_{i,j}$ is:
    \begin{equation}
        \frac {\partial {\hat{\varepsilon}_{i,j}}}{\partial{n_j}}=\frac {\partial {\hat{\varepsilon}_{i,j}}}{\partial{w_{i,j}}}\underbrace{\frac {\partial {w_{i,j}}}{\partial{n_j}}}_{\geqslant 0}\geqslant 0,
    \end{equation}
    where
    \begin{equation}
        \frac {\partial {\hat{\varepsilon}_{i,j}}}{\partial{w_{i,j}}}=a(-\hat{w})b(-\hat{w})e^{-a(-\hat{w})w_{i,j}}>0.
    \end{equation}
    Then, we further investigate the concavity of $\hat{\varepsilon}_{i,j}$ with respect to $n_j$, we have
    \begin{equation}
        \frac{\partial^2 \hat{\varepsilon}_{i,j}}{\partial n_{j}^2}=\frac{\partial^2 \hat{\varepsilon}_{i,j}}{\partial w_{i,j}^2}\underbrace{\left(\frac {\partial {w_{i,j}}}{\partial{n_j}}\right)^2}_{\geqslant 0}+\underbrace{\frac{\partial \hat{\varepsilon}_{i,j}}{\partial w_{i,j}}}_{> 0}\underbrace{\frac{\partial^2 w_{i,j}}{\partial n_j^2}}_{\leqslant 0}\leqslant 0,
    \end{equation}
    where
    \begin{equation}
        \frac{\partial^2 \hat{\varepsilon}_{i,j}}{\partial w_{i,j}^2}=-a^2(-\hat{w})b(-\hat{w})e^{-a(-\hat{w})w}<0.
    \end{equation}
    Therefore, the second term of $\hat{f}^{(t)}$ is concave with respect to $n\subscript{K}$. Hence, $\hat{f}^{(t)}$ is concave. It is also trivial to show that all the constraints are either convex or linear, i.e., the feasible set of Problem (\ref{Problem_BCD_n_K}) is convex. Since the objective function to be maximized is concave and its feasible set is convex, Problem (\ref{Problem_BCD_n_K}) is a convex problem.
\end{proof}

\section{Proof of Theorem~\ref{theorem:g_t_concave}}
\label{app:theorem_g_t_concave}
\begin{proof}
According to the proof in Appendix \ref{app:theorem_f_t_concave}, $\varepsilon\subscript{Eve,M}$ is convex and $\hat{\varepsilon}\subscript{Bob,M}^{(t)}$ is concave with respect to $n\subscript{M}$. Thus, $\hat{g}^{(t)}$ is concave in $n\subscript{M}$.
\end{proof}





\end{document}